\begin{document}

\title{On the design of a family of CI pseudo-random number generators}
\author{\IEEEauthorblockN{Jacques M. Bahi,
Xiaole Fang,
Christophe Guyeux, and
Qianxue Wang}
\IEEEauthorblockA{University of Franche-Comt\'{e}\\
Computer Science Laboratory LIFC,
Besan\c con, France\\ Email:{jacques.bahi, xiaole.fang,
christophe.guyeux, qianxue.wang}@univ-fcomte.fr }}

\maketitle

\begin{abstract}
Chaos and its applications in the field of secure communications have attracted a lot of attention.
Chaos-based pseudo-random number generators are critical to guarantee security over open networks as the Internet.
We have previously demonstrated that it is possible to define such generators with good statistical properties by using a tool called ``chaotic iterations'', which depends on an iteration function.
An approach to find update functions such that the associated generator presents a random-like and chaotic behavior is proposed in this research work.
To do so, we use the vectorial Boolean negation as a prototype and explain how to modify this iteration function without deflating the good properties of the associated generator.
Simulation results and basic security analysis are then presented to evaluate the randomness of this new family of generators.
\end{abstract}

\begin{IEEEkeywords}
Chaos; Pseudo-random number generator; Statistical tests; Internet security; Iteration function.

\end{IEEEkeywords}


\IEEEpeerreviewmaketitle

\section{Introduction}

Security has become a topic of increasing importance in communications because the Internet and personal communications systems are now accessible worldwide.
To guarantee this security, chaotic systems have many advantages as unpredictability or disorder-like, and they are especially used when complex sequences are required~\cite{Behnia20113455,Hu20092286,DeMicco20083373}.
This is why chaotic systems are frequently used to design new pseudo-random number generators (PRNGs) ~\cite{Behnia20113455,Niansheng}.
Following this approach, we have previously proposed a PRNG based on chaotic iterations. A short overview of our recent researches in this field is given hereafter.

In Ref.~\cite{guyeux09}, it is proven  that chaotic  iterations (CIs),  a suitable
tool for fast computing  iterative algorithms, satisfies the topological chaos
property, as defined by Devaney~\cite{devaney}.  The chaotic behavior of CIs
is used in~\cite{bgw09:ip}, to obtain an unpredictable PRNG that
depends on two  logistic maps.   The resulted  PRNG shows
better   statistical   properties   than   each  individual   component   alone.
Additionally, various chaos properties  have been established. These chaos properties, inherited from CIs, are not
possessed by  the two inputted generators.   We have shown that,  in addition of
being  chaotic,  this generator  can  pass the  NIST  battery  of tests,  widely
considered as a  comprehensive and stringent battery of  tests for cryptographic
applications~\cite{nist}.
Then we have achieved  to improve the speed of the former generator in~\cite{bgw10:ip,guyeuxTaiwan10}, by using ISAAC and XORshift instead of the two logistic maps.
These generators can pass the batteries DieHARD~\cite{diehard} and TestU01~\cite{testU01}.

In these previous researches, the iteration function of CIs was always the vectorial Boolean negation.
We propose now to enlarge the set of iteration functions such that the associated CI-based generator is both chaotic and random-like.
The well-known NIST and DieHARD tests are finally used to evaluate the statistical behavior of this new family of generators.

The rest of this paper is organized as follows. In the next
section, some basic definitions concerning CIs and our PRNG are recalled.
In Section~\ref{section:description} is explained how it is possible to change the iteration function of the generator without losing the good properties of our PRNG.
NIST and DieHARD batteries are passed in Section~\ref{sec:nist} to all of these generators.
 The
paper ends with a conclusion section where our contribution is summarized and intended future work is presented.

\section{Review of Basics}
\label{Basic recalls}

This section is devoted to basic notations and terminologies in the fields of chaotic iterations and PRNGs.

\subsection{Notations}
\begin{tabular}{@{}c@{}@{}l@{}}
$\llbracket 1;\mathsf{N} \rrbracket$ & $\rightarrow\{1,2,\hdots,\mathsf{N}\}$ \\
$S^{n}$ & $\rightarrow$ the $n^{th}$ term of a sequence $S=(S^{1},S^{2},\hdots)$ \\
$v_{i}$ & $\rightarrow$ the $i^{th}$ component of a vector\\
&~~~~$v=(v_{1},v_{2},\hdots, v_n)$\\
$\emph{strategy}$ & $\rightarrow$ a sequence which elements belong in $%
\llbracket 1;\mathsf{N} \rrbracket $ \\
$\mathbb{S}$ & $\rightarrow$ the set of all strategies \\
$\mathds{N}^{\ast }$ & $\rightarrow$ the set of positive integers \{1,2,3,...\} \\
$\mathds{B}$ & $\rightarrow$  $\{0,1\}$ \\
\end{tabular}


\subsection{Chaotic iterations}
\label{subsection:Chaotic iterations}

\begin{definition}
Let $f:\mathds{B}^{\mathsf{N}%
}\longrightarrow \mathds{B}^{\mathsf{N}}$ be an ``iteration'' function and $S\in \mathbb{S}
$. Then, the so-called \emph{chaotic iterations} are defined by~\cite{Robert1986} $x^0\in \mathds{B}^{\mathsf{N}}$ and
$$
\begin{array}{l}
\forall n\in \mathds{N}^{\ast },\forall i\in \llbracket1;\mathsf{N}\rrbracket%
,x_i^n=\left\{
\begin{array}{l}
x_i^{n-1}~~~~~\text{if}~S^n\neq i \\
f(x^{n-1})_{S^n}~\text{if}~S^n=i.\end{array} \right. \end{array}
$$
\end{definition}
In other words, at the $n^{th}$ iteration, only the $S^{n}-$th cell is
\textquotedblleft iterated\textquotedblright.

\subsection{Mapping matrix}

Chaotic iterations introduced above can be described by using the mapping matrix defined bellow.

\begin{definition}
Let $f:\mathds{B}^{\mathsf{N}}\longrightarrow \mathds{B}^{\mathsf{N}}$
be an iteration function, then its associated \emph{mapping matrix}
$\mathsf{f}$ is the matrix of size $\mathsf{N} \times 2^\mathsf{N}$ whose element  $\mathsf{f}_{p,q}$ is the integer having the following binary decomposition:  $q_\mathsf{N}, \hdots, q_{\mathsf{N}-p}, f(q)_{\mathsf{N}-p+1}, q_{\mathsf{N}-p+2}, \hdots, q_1  $, where $q_i$ (resp. $f(q)_i$) is the $i-$th binary digit of $q$ (resp. of $f(q)$).
\end{definition}

The relation between $\mathsf{f}$ and chaotic iterations of $f$ can be
understood as follows. If the current state of the system is $q$ and the
strategy is $p$, then the next state (under the chaotic iterations of $f
$) will be $\mathsf{f}_{p,q}$. Finally, the vector $\mathcal{F}(f)=(f(0),f(1),\ldots,f(2^{\mathsf{N}}-1)) \in \llbracket 0 ; 2^{\mathsf{N}}-1 \rrbracket^{2^{\mathsf{N}}}$ is called \emph{vector of images}.
An example is shown for the vectorial Boolean negation $f_0 (x_1, \hdots, x_\mathsf{N}) = (\overline{x_1}, \hdots, \overline{x_\mathsf{N}})$ in Table~
\ref{negation_output}.
\begin{table*}[!t]
\renewcommand{\arraystretch}{1.2}
\caption{The matrix $\mathsf{f}$ associated to $f_0$}
\label{negation_output}
\centering
\begin{tabular}{c|cccccccccccccccc}
\backslashbox{p}{q} &~~~0 & 1 & 2 & 3 & 4 & 5 & 6 & 7 & 8 & 9 & 10 & 11
& 12 & 13 & 14 & 15~~~ \\ \hline
 $(f_0(q)_1,q_2,q_3,q_4)$ &\multicolumn{1}{r}{\multirow{4}*{$\left(
\begin{array}{c}
8  \\
4 \\
2\\
1\\
\end{array}
\right.$}}&9&10&11&12&13&14&15&0&1&2&3&4&5&6&
\multicolumn{1}{l}{\multirow{4}*{$\left.
\begin{array}{c}
7  \\
11 \\
13\\
14\\
\end{array}
\right)$
}}\\
$(q_1,f_0(q)_2,q_3,q_4)$&&5&6&7&0&1&2&3&12&13&14&15&8&9&10& \\
$(q_1,q_2,f_0(q)_3,q_4)$&&3&0&1&6&7&4&5&10&11&8&9&14&15&12& \\
$(q_1,q_2,q_3,f_0(q)_4)$& &0&3&2&5&4&7&6&9&8&11&10&13&12&15&\\\hline
$\mathcal{F}(f_0)$ &~~~(15,&14,&13,&12,&11,&10,&9,&8,&7,&6,&5,&4,&3,&2,&1,&0)~~~\\
\hline
\end{tabular}
\end{table*}
\subsection{Chaotic iterations as PRNG}
\label{subsec Chaotic iterations as PRNG}

%
%

Algorithm~\ref{Algo:Chaotic iteration} recalls the basic design procedure of our $CI_f(PRNG1,PRNG2)$ generator.
The internal state is $x$, the output array is $r$, and $\mathsf{N} \in \mathds{N}, \mathsf{N} \geqslant 2$. 
Parameters $k$ and $\mathsf{N}$ are constants, PRNG1 picks its values into $\{0;1\}$, and PRNG2 takes a random integer into $\llbracket 1;\mathsf{N} \rrbracket$.
We have previously established that $k$ must be greater than $3\mathsf{N}$ (see \cite{bgw09:ip}).
Finally, until now, $f = f_0$.

\begin{algorithm}
\SetAlgoLined
\KwIn{the internal state $x$ ($\mathsf{N}$ bits)}
\KwOut{a state $r$ of $\mathsf{N}$ bits}
$m\leftarrow{PRNG1()+k}$\;
\For{$i=0,\dots,m$}
{
$S\leftarrow{PRNG2()}$\;
$x_S\leftarrow{f(x)_S}$\;
}
$r\leftarrow{x}$\;
return $r$\;
\medskip
\caption{An arbitrary round of $CI_f(PRNG1,PRNG2)$}
\label{Algo:Chaotic iteration}
\end{algorithm}

This $CI_f(PRNG1,PRNG2)$ generator may utilize any reasonable PRNGs as inputs.
For demonstration purposes, two XORshift are adopted here for both
PRNG1 and PRNG2.
Table \ref{table application example} gives an illustrative example using these PRNGs,
where $\mathsf{N} = k = 4$ and $\mathcal{F}(f)$ = (14, 14,
12, 12, 10, 10, 9, 9, 6, 6, 4, 4, 2, 2, 1, 0).

\begin{table*}
\centering
\begin{tabular}{|c|ccccc|cccccc|cccccc|}
\hline\hline
$m:$ & & & 4 & & & & & 5 & & & & & & 4 & & & \\ \hline
$S$ & 2 & 4 & 2 & 3 & & 4 & 1 & 1 & 4 & 4 & & 3 & 2 & 3 & 3 & & \\
\hline

 & 1& 1& 1& 1&
 & 1& \textbf{1} & \textbf{0} &1 &1 &
 &1 & 1& 1&1 &
 & \\
$f$(x) & \textbf{0} & 1& \textbf{1} & 0 &
 &0 &0& 0&0 & 0&
 &0 &\textbf{0} & 1&1 & &\\
 &1 & 1& 1& \textbf{1}&
 &0 &0 &0 &0 &0 &
& \textbf{0} & 1& \textbf{1} & \textbf{0} &
 & \\
 &0 &\textbf{0} &0 &0 &
 &\textbf{1} &1 &0 &\textbf{1} &\textbf{1} &
 &1 &0 & 0&0 &
 &\\\hline
$x^{0}$ & & & & & $x^{4}$ & & & & & & $x^{9}$ & & & & & $x^{13}$ & \\
4 &0 &0 &4 &6&6 &7 &15 &7 &7 &7 &7&5  &1 &3 &1 &1 &  \\ & & & && & & & &
& &  & & & & & & \\
0 & & & & &
0 & & $\xrightarrow{1} 1$ & $\xrightarrow{1} 0$ & & &
0 & & & & &
0 & \\
1 & $\xrightarrow{2} 0$ & & $\xrightarrow{2} 1$ &  &
1 & & & & & &
1 & & $\xrightarrow{2} 0$ & & & 0 &\\
0 & & & & $\xrightarrow{3} 1$ &
1 & & & & & &
1 & $\xrightarrow{3} 0$ & & $\xrightarrow{3} 1$ & $\xrightarrow{3} 0$ &
0 &\\
0 & & $\xrightarrow{4} 0$ & & &
0 &$\xrightarrow{4} 1$ & & & $\xrightarrow{4} 1$&$\xrightarrow{4} 1$ &
1 & & & & &
1 &\\
\hline\hline
\end{tabular}\\
\vspace{0.5cm}
Binary Output:
$x_1^{0}x_2^{0}x_3^{0}x_4^{0}x_5^{0}x_1^{4}x_2^{4}x_3^{4}x_4^{4}x_5^{4}x_1^{9}x_2^{9}x_3^{9}x_4^{9}x_5^{9}x_1^{13}x_2^{13}... = 0100011001110001...$

Integer Output:
$x^{0},x^{0},x^{4},x^{6},x^{8}... = 6,7,1...$
\caption{Application example}
\label{table application example}
\end{table*}

\section{Description of the selection scheme}
\label{section:description}

In this section is explained how the iteration function $f_0$ can be replaced without losing chaos and randomness.

\subsection{Strong connectivity and chaos}

Let $f:\mathds{B}^\mathsf{N} \rightarrow \mathds{B}^\mathsf{N}$. Its
{\emph{iteration graph}} $\Gamma(f)$ is the directed graph defined as follows. 
The set of vertices is
$\mathds{B}^\mathsf{N}$, and $\forall x\in\mathds{B}^\mathsf{N}, \forall i\in \llbracket1;\mathsf{N}\rrbracket$,
$\Gamma(f)$ contains an arc labeled $i$ from $x = (x_1, \hdots, x_\mathsf{N})$ to $(x_1, \hdots, x_{i-1}, f(x)_i, x_{i+1}, \hdots, x_\mathsf{N})$. 
We have proven in~\cite{GuyeuxThese10} that:

\begin{theorem}
\label{Th:Caractérisation   des   IC   chaotiques}
The $CI_f(PRNG1,PRNG2)$ generator is chaotic according to Devaney if and only if the graph $\Gamma(f)$ is strongly connected.
\end{theorem}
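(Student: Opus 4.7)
The plan is to phrase chaotic iterations as a genuine discrete dynamical system on a product space and then to translate each of Devaney's three conditions (transitivity, density of periodic points, sensitivity) into a combinatorial property of $\Gamma(f)$. I would work on $\mathcal{X}=\mathds{B}^{\mathsf{N}}\times\mathbb{S}$ equipped with the usual product metric (Hamming-type distance on the $\mathds{B}^{\mathsf{N}}$ part plus a $\sum 2^{-k}$-style distance comparing strategies term by term), and define $G:\mathcal{X}\to\mathcal{X}$ by $G(x,S)=(F(x,S^{1}),\sigma(S))$, where $F$ applies one CI step and $\sigma$ is the left-shift. Open balls in this metric single out the first few coordinates of $x$ together with an initial window $S^{1},\dots,S^{k}$ of the strategy, which is exactly what makes the correspondence with paths in $\Gamma(f)$ tight.

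For the easy direction (chaos $\Rightarrow$ strong connectivity), I would argue by contraposition: if some $y$ is unreachable from some $x$ in $\Gamma(f)$, then the cylinder $\{x\}\times\mathbb{S}$ and any small ball around $(y,T)$ are disjoint and no iterate of $G$ can send one into the other, killing transitivity and hence Devaney chaos.

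For the main direction (strong connectivity $\Rightarrow$ chaos), transitivity is the heart of the argument. Given two basic open sets $U$ and $V$, encoded by prefixes $(x,S^{1},\dots,S^{k})$ and $(y,T^{1},\dots,T^{\ell})$, I would first run the $k$ prescribed steps from $x$, landing on some state $x'\in\mathds{B}^{\mathsf{N}}$. Strong connectivity of $\Gamma(f)$ supplies a finite labeled path from $x'$ to $y$; concatenating the prefix $S^{1}\dots S^{k}$, this path, and then $T^{1}\dots T^{\ell}$ (followed by anything) produces a strategy whose orbit starts in $U$ and meets $V$ after finitely many steps. Density of periodic points follows by the same idea applied to a closed loop: from the current state, walk in $\Gamma(f)$ back to itself and make the strategy periodic with that word, after freezing the $k$ prefix symbols inside $U$. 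Sensitivity is then either obtained as a corollary (Banks et al.\ on infinite spaces) or directly by perturbing one far-away symbol of $S$ and using strong connectivity again to amplify it to a full-amplitude divergence on the $\mathds{B}^{\mathsf{N}}$ coordinate.

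The genuinely delicate step will be transitivity, because it must be established simultaneously on the state and the strategy, and the construction of the witness strategy has to respect the prefix dictated by $U$ while still reaching the prefix dictated by $V$. Everything else is bookkeeping once one adopts the product-space viewpoint; the combinatorial content of the theorem is exactly that \emph{``any state can be driven to any other state by a suitable strategy''} is the same statement as strong connectivity of $\Gamma(f)$.
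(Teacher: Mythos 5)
Your proposal is correct and follows essentially the same route as the proof the paper relies on (it only cites the result, from the thesis reference): model the chaotic iterations as a map $G_f(x,S)=\bigl(F(x,S^1),\sigma(S)\bigr)$ on $\mathds{B}^{\mathsf{N}}\times\mathbb{S}$ with a Hamming-plus-geometric product metric, prove transitivity and density of periodic points by lifting paths and cycles of $\Gamma(f)$ to strategy prefixes, and obtain sensitivity via Banks et al. I see no gap worth flagging; the cylinder/prefix bookkeeping you describe is exactly what makes both directions work.
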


Theorem \ref{Th:Caractérisation   des   IC   chaotiques} only focus on the topological chaos property.
However, it is possible to find chaotic sequences with bad statistical properties, in particular when the iteration function is unbalanced.

\subsection{Obtaining Balanced Maps}
\label{The generation of pseudo-random sequence}


We now explain how to find balanced iterate functions.

\begin{theorem}
Let $j \in \llbracket 1; 2^\mathsf{N} \rrbracket$ and $F = \mathcal{F}(f_0)$ be the (balanced) vectorial Boolean negation: $F_{j}=2^\mathsf{N}-j$.

If $F' = \mathcal{F}(f)$, a vector of images of a \emph{balanced} iterate function $f$, is such that its $j-$th component differs from $F_j$ by only its $i-th$ bit (starting from the right), then $F'_{2^\mathsf{N}-F'_j}=2^\mathsf{N}-j$.
\end{theorem}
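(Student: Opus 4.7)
The plan is to exploit the involutive character of the Boolean negation $f_0$ together with the bijectivity implicit in the hypothesis that $f$ is balanced, and then track the effect of a single bit flip at position $j$.

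First I would unpack the hypothesis by setting $v := F'_j$. The assumption that $F'_j$ differs from $F_j = 2^{\mathsf{N}} - j$ only in its $i$-th bit rewrites as $v = (2^{\mathsf{N}}-j) \oplus 2^{i-1}$. The crucial observation is that the unperturbed negation $F$ attains the value $v$ at a unique index, namely $k := 2^{\mathsf{N}} - v$, because $F_k = 2^{\mathsf{N}} - k = v$ by the formula defining $F$. Note that this $k$ is exactly the index appearing on the right-hand side of the conclusion, so the whole statement reduces to showing $F'_k = 2^{\mathsf{N}} - j$.

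Next I would invoke balancedness, reading it as the condition that $F'$ is a permutation of $\{0,1,\dots,2^{\mathsf{N}}-1\}$. Since $F'_j = v$ and values cannot repeat, the original assignment $F_k = v$ must be overwritten, i.e.\ $F'_k \neq v$. This already constrains position $k$ significantly; the only remaining task is to pin down \emph{which} new value sits at $k$.

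The crux of the proof is to show that this new value is forced to be $F_j = 2^{\mathsf{N}} - j$. My plan is to exhibit the modification at hand as a transposition between positions $j$ and $k$: because $F_j$ and $F_k = v$ already differ only in bit $i$ (indeed $F_j \oplus F_k = F_j \oplus v = 2^{i-1}$), the only way to repair the collision caused by placing $v$ at position $j$ while preserving the single-bit-change structure is to put $F_j$ into position $k$. Substituting $k = 2^{\mathsf{N}} - F'_j$ then yields $F'_{2^{\mathsf{N}} - F'_j} = 2^{\mathsf{N}} - j$.

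The hard part, I expect, is precisely this last step: balancedness alone guarantees that \emph{something} different from $v$ lands at position $k$, but it does not immediately single out the value $2^{\mathsf{N}} - j$. My suspicion is that the intended reading of the hypothesis is that the modification of $F$ into $F'$ is carried out by elementary swaps between positions whose $F$-values differ in a single bit, in which case the transposition between $j$ and $k$ is the unique admissible repair and the conclusion follows immediately. Before writing a full proof, I would therefore want to clarify whether the balancedness condition is being used implicitly with this minimality, or whether an additional invariant (such as the pairing structure built into the selection scheme of Section~\ref{section:description}) must be invoked to rule out more complex rearrangements.
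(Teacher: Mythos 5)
Your opening reduction is exactly the paper's: set $v=F'_j=(2^{\mathsf{N}}-j)\oplus 2^{i-1}$, let $k=2^{\mathsf{N}}-v$ be the position where the negation attains $v$, and reduce the claim to $F'_k=2^{\mathsf{N}}-j$. But the crux you then flag is a genuine gap, not a detail to be clarified: the step ``the only way to repair the collision is to put $F_j$ into position $k$'' does not follow from balancedness under your reading (bijectivity of $F'$), nor even under stronger readings. Concretely, for $\mathsf{N}=2$ take $F=(3,2,1,0)$ and $F'=(2,1,0,3)$, i.e. $f(0)=2$, $f(1)=1$, $f(2)=0$, $f(3)=3$: this $f$ is a bijection, each coordinate function is balanced, and both rows of its mapping matrix $\mathsf{f}'$ are permutations of $\{0,1,2,3\}$; with $j=1$, $i=1$ the hypothesis holds ($F'_1=2$ differs from $F_1=3$ only in the first bit), yet $F'_{2^{\mathsf{N}}-F'_1}=F'_2=1\neq 3$. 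So no notion of balance by itself forces the transposition you need; the ``additional invariant'' you ask about in your last paragraph is indispensable, and your proposal stops precisely where the real work begins.

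That missing ingredient is what the paper's proof supplies (by assumption), and its argument runs through the mapping matrix rather than through $F'$ directly. It first computes (Equation~(\ref{eq2})) that flipping bit $i$ of $F_j$ forces $\mathsf{f}'_{i,j}=j-1$, so in row $i$ the value $j-1$ now occurs twice while $\mathsf{f}_{i,j}=(j-1)\oplus 2^{i-1}$ has disappeared; ``balance'' is read as the uniform distribution of the values in each row of the mapping matrix, and the repair is stipulated to occur at the unique column $k$ with $\mathsf{f}_{i,k}=j-1$, by imposing $\mathsf{f}'_{i,k}=\mathsf{f}_{i,j}$ --- exactly the minimal-repair reading you suspected. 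Because a row-$i$ entry in column $k$ can differ from $k-1$ only in its $i$-th bit, this single stipulation pins down everything, and the bit manipulations in Equations~(\ref{eq6})--(\ref{eq12}) yield $F'_k=2^{\mathsf{N}}-j$ and $k=\mathsf{f}_{i,j}+1=2^{\mathsf{N}}-F'_j$, which is the stated conclusion. So your diagnosis is sound --- the theorem is really about the forced local repair of row $i$ when one bit of the negation's image vector is altered --- but as written your argument is not a proof, and the route through bijectivity (or component-wise balance) of $F'$ provably cannot be completed without that additional structural hypothesis.
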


\begin{proof}
As $F'_j$ only differs from $F_j$ by its $i-th$ bit, we have: $F'_j=F_j-F_j\&2^{i-1}+(j-1)\&2^{i-1}.$
Therefore, the value $\mathsf{f}'_{i,j}$ of the mapping matrix of $F'$ can be computed as follows:
\begin{equation}
\label{eq2}
\begin{array}{l}
\mathsf{f}'_{i,j}=j_\mathsf{N}j_{\mathsf{N}-1}...f(j)_i...j_1 \\
=(j-1)-(j-1)\&2^{i-1}+F'_j\&2^{i-1}\\
=(j-1)-(j-1)\&2^{i-1}+F_j\&2^{i-1}-F_j\&2^{i-1}+\\
 ~~~~~~~~~~~~~~~~~~~~~~~~~~~~~~~~~~~~~~~~~~~~~~~~(j-1)\&2^{i-1}\\
=(j-1)\\
\end{array}
\end{equation}
The values in $\mathsf{f}$ are uniformly distributed.
However, in the new matrix $\mathsf{N}$, there are twice the value $j-1$ and no $\mathsf{f}_{i,j}$ in the $i$-th row: the uniform distribution is lost. 
To restore the balance, one of the two $j-1$ values must be found and replaced by $\mathsf{f}_{i,j}$. Let $k$ be a variable such that $\mathsf{f}_{i,k}=j-1$ and $\mathsf{f}'_{i,k}=\mathsf{f}_{i,j}$.

As the $i-$th bits in ${f}_{i,k}$ and ${f}'_{i,k}$ are equal, we have:
\begin{equation}
\label{eq5}
\mathsf{f}'_{i,k}\&(2^\mathsf{N}-1-2^{i-1})=\mathsf{f}_{i,j}\&(2^\mathsf{N}-1-2^{i-1}).
\end{equation}

We can thus transform the equation $\mathsf{f}_{i,k}=j-1$ as follows:
\begin{equation}
\label{eq6}
\begin{array}{lll}
\mathsf{f}_{i,k}&=&j-1\\
(k-1)-(k-1)\&2^{i-1}+F_k\&2^{i-1}& =&j-1\\
(k-1)\&(2^\mathsf{N}-1-2^{i-1})+F_k\&2^{i-1}&=&j-1.
\end{array}
\end{equation}

Moreover, from $F_k\&2^i = 2^\mathsf{N}-1-k$, we obtain:
\begin{equation}
\label{eq7}
\begin{array}{lll}
F_k\&2^{i-1}&=&(j-1)\&2^{i-1} \\
(2^\mathsf{N}-k)\&2^i&=&(j-1)\&2^{i-1}\\
(k-1)\&2^{i-1}&=&(k-1)\&2^{i-1}-(j-1)\&2^{i-1}.
\end{array}
\end{equation}

According to Equations (\ref{eq6}) and (\ref{eq7}), we have:
\begin{equation}
\label{eq8}
\begin{array}{ll}
k-1=(j-1)+(k-1)\&2^{i-1}-F_k\&2^{i-1}, \\
\text{where } F_k=2^\mathsf{N}-k \\
 =(j-1)+(k-1)\&2^{i-1}-2^{i-1}+(k-1)\&2^{i-1}\\
 =(j-1)-2^{i-1}+((k-1)\&2^{i-1})*2. \\
\text{But, due to Equation} (\ref{eq8}), \text{ we have:} \\
 =(j-1)-2^{i-1}+((k-1)\&2^{i-1}-(j-1)\&2^{i-1})*2 \\
 =(j-1)+2^{i-1}-((j-1)\&2^{i-1})*2 \\
 =(j-1)+(j-1)\&2^{i-1}+(2^\mathsf{N}-j)\&2^{i-1},\\
\text{where } F_j=2^\mathsf{N}-1-j \\
 =(j-1)+(j-1)\&2^{i-1}+F_j\&2^{i-1} \\
 =\mathsf{f}_{i,j}.
\end{array}
\end{equation}

As
\begin{equation}
\mathsf{f}'_{i,k}=(k-1)-(k-1)\&2^{i-1}+F'_k\&2^{i-1}\\
\end{equation}
and according to Equation (\ref{eq8}), we thus have:
\begin{equation}
\label{eq9}
\begin{array}{ll}
F'_k\&2^{i-1} &=(k-1)\&2^{i-1}.
\end{array}
\end{equation} 

Now, from Equation (\ref{eq5}), we can set that:
\begin{equation}
\label{eq10}
\begin{array}{lr}
\mathsf{f}'_{i,k}\&(2^\mathsf{N}-1-2^{i-1})=\mathsf{f}_{i,j}\&(2^\mathsf{N}-1-2^{i-1}) \\
\multicolumn{2}{l}{((k-1)-(k-1)\&2^{i-1}+F'_k\&2^{i-1})\&(2^\mathsf{N}-1-2^{i-1})=}\\
~~~~((k-1)-(k-1)\&2^{i-1}+F_k\&2^{i-1})\&(2^\mathsf{N}-1-2^{i-1}) \\
F'_k\&(2^\mathsf{N}-1-2^i)=F_k\&(2^\mathsf{N}-1-2^i).\\
\end{array}
\end{equation}

By using both Equations (\ref{eq9}) and (\ref{eq10}), we obtain:
\begin{equation}
\label{eq11}
\begin{array}{ll}
F'_k&=(k-1)\&2^{i-1}+F_k\&(2^\mathsf{N}-1-2^{i-1})\\
&=(k-1)\&2^{i-1}+F_k-F_k\&2^{i-1}\\
&=\mathsf{f}_{i,j} \& 2^{i-1}+(2^N-1-\mathsf{f}_{i,j})-(2^N-1-\mathsf{f}_{i,j}) \& 2^{i-1}\\
&=(2^N-1)-[\mathsf{f}_{i,j}+2^{i-1}-2 \times (\mathsf{f}_{i,j} \& 2^{i-1})] \\
&=(2^N-1)-((j-1)-(j-1)\&2^{i-1}+F_j\&2^{i-1}\\
&+2^{i-1}-2 \times [((j-1)-(j-1) \& 2^{i-1}+F_j \& 2^{i-1})\\
&~~~~~~~~~~~~~~~~~~~~~~~~~~~~~~~~~~~~~~~~~~~~~~~~~~~~~~\& 2^{i-1}\\
&=(2^N-1)-((j-1)-(j-1)\&2^{i-1}-F_j\&2^{i-1}\\
&~~~~~~~~~~~~~~~~~~~~~~~~~~~~~~~~~~~~~~~~~~~~~~~~~~~~~+2^{i-1})\\
&=(2^N-1)-((j-1)+2^{i-1}\&(2^N-j)-F_j\&2^{i-1})\\
&=(2^N-1)-((j-1)+F_j\&2^{i-1}-F_j\&2^{i-1})\\
&=(2^N-1)-(j-1)\\
&=2^N-j.\\
\end{array}
\end{equation}

Finally, from Equation (\ref{eq8}), we can conclude that:
\begin{equation}
\label{eq12}
\begin{array}{ll}
k&=\mathsf{f}_{i,j}+1\\
&=(j-1)-(j-1)\&2^{i-1}+F_j\&2^{i-1} \\
&=2^N-(2^N-j)-(j-1)\&2^{i-1}+F_j\&2^{i-1}\\
&=2^N-(F_j-F_j\&2^i+(j-1)\&2^{i-1})  \\
&=2^N-F'_j.
\end{array}
\end{equation}
\end{proof}

With such equations (namely, Eq. (\ref{eq11}) and (\ref{eq12})), the balance of the new table can be obtained by computing the mapping values. In other words, there is a bijection from the set A of the inputs $x$ into the set B of $F'(x)$ values.

Let us give an example. In Table~\ref{negation_output} is given the mapping matrix for the vectorial Boolean negation, with $\mathsf{N}=4$. Obviously, the values in $\mathsf{f}$ are uniformly distributed: each integer from 0 to 15 occurs once per row. 
Now, if we desire to set $F'_1$ as 14, then $\mathsf{f}'_{4,1}=0$: there will be two $0$ and no $1$ in the fourth row of $\mathsf{f}'$. Due to the previous study, we know that $F'_2$ must be set to 15 too, which leads to $\mathsf{f}'_{4,2}=1$: the balance is recovered.

To sum up, we can determine whether the modification of a bit in the vector of images of the negation function preserves the balance of the outputs or not, by using the following rule (necessary condition):
\begin{itemize}
\item if $F'_j = C$,
\item then $C = F_j-F_j\&2^{i-1}+(j-1)\&2^{i-1}$,
\item and also $F_{2^N-C} = 2^N-j$.
\end{itemize}

This rule, we name it ``Balance Iteration Mapping Rule'', can be used as a criterion to find iterate functions leading to good CI PRNGS, as it is depicted in Algorithm \ref{Chaotic iteration}.
Let us finally remark that, with such a process, it is possible to find new iteration functions by changing more than 1 couple of values in the vectorial Boolean negation $F$. Indeed it is obvious that 2, 3, 4, and even 8 couples of values can be changed using the Balance Iteration Mapping Rule.
For instance, Table~\ref{New vectors of images} contains 8 vectors of images obtained by using Algorithm \ref{Chaotic iteration} one or more times. All of these functions satisfy the hypothesis of Theorem \ref{Th:Caractérisation   des   IC   chaotiques} too, and thus their dynamical systems behave chaotically.


\begin{algorithm}
\SetAlgoLined
\KwIn{a vector of images $F$}
\KwOut{a vector of images $r$ or 0}
\For{$i=0,\dots,2^\mathsf{N}-1$}
{
\For{$j=0,\dots,N$}
{

\If{$F(i+1) \neq F(i+1)-F(i+1) \& 2^j+i \& 2^j$}
{

\If{$F(i+1) \neq 2^N-1-i$}
{
return 0;
}

}

}

\If{$F(2^N-F(i)) \neq 2^N-i$}
{
return 0;
}

}
return $F$\;
\caption{The Balance Iteration Mapping Rule.}
\label{Chaotic iteration}
\end{algorithm}

\begin{table}
\centering
\begin{tabular}{|c|l|}
\hline
Name & Map \\
\hline
$F$&[15,14,13,12,11,10,9,8,7,6,5,4,3,2,1,0]\\
\hline
$F'1$&[14,15,13,12,11,10,9,8,7,6,5,4,3,2,1,0]\\
\hline
$F'2$&[14,15,13,12,9,10,11,8,7,6,5,4,3,2,1,0]\\
\hline
$F'3$&[14,15,9,4,11,8,13,10,7,6,5,12,3,2,1,0]\\
\hline
$F'4$&[14,15,9,12,3,8,13,10,7,6,5,4,11,2,1,0]\\
\hline
$F'5$&[14,15,9,4,11,8,13,10,7,6,5,12,3,2,0,1]\\
\hline
$F'6$&[14,15,9,4,11,8,13,10,3,6,5,12,7,2,0,1]\\
\hline
$F'7$&[14,15,9,4,3,8,13,10,5,2,7,12,11,6,1,0]\\
\hline
$F'8$&[14,15,5,8,9,2,11,12,3,4,13,6,7,10,0,1]\\
\hline
\end{tabular}
\caption{New vectors of images}
\label{New vectors of images}
\end{table}

\begin{table*}[t]
\renewcommand{\arraystretch}{1.3}
\caption{Results through NIST SP 800-22 and DieHARD batteries of tests ($\mathbb{P}_T$ values)}
\label{The passing rate}
\centering
  \begin{tabular}{|l||c|c|c|c|c|c|c|c|}
    \hline
Method & $F'_1$ &  $F'_2$ & $F'_3$ & $F'_4$ & $F'_5$ & $F'_6$ & $F'_7$ & $F'_8$\\ \hline\hline

Frequency (Monobit) Test            &  0.102526 &  0.017912 &  0.171867 &  0.779188 &  0.971699 &  0.275709 &  0.137282 &    0.699313 \\ \hline
Frequency Test within a Block             &  0.085587 &  0.657933 &  0.779188 &  0.897763 &  0.851383 &  0.383827 &  0.262249 &    0.122325 \\ \hline
Cumulative Sums (Cusum) Test*             &  0.264576 &  0.185074 &  0.228927 &  0.736333 &  0.462694 &  0.169816 &  0.391715 &    0.729111\\ \hline
Runs Test                    &  0.739918 &  0.334538 &  0.798139 &  0.834308 &  0.153763 &  0.719747 &  0.534146 &    0.262249 \\ \hline
Test for the Longest Run of Ones in a Block     &  0.678686 &  0.474986 &  0.637119 &  0.037566 &  0.366918 &  0.739918 &  0.236810 &    0.759756 \\ \hline
Binary Matrix Rank Test                & 0.816537 &  0.534146 &  0.249284 &  0.883171 &  0.739918 &  0.037566 &  0.798139 &    0.867692 \\ \hline
Discrete Fourier Transform (Spectral) Test     &   0.798139 &  0.474986 &  0.014550 &  0.366918 &  0.595549 &  0.115387 &  0.798139 &    0.153763 \\ \hline
Non-overlapping Template Matching Test*        &  0.489304 &  0.507177 &  0.477005 &  0.557597 &  0.452278 &  0.505673 &  0.541034 &    0.497140 \\ \hline
Overlapping Template Matching Test        &  0.514124 &  0.171867 &  0.162606 &  0.816537 &  0.319084 &  0.678686 &  0.534146 &    0.798139 \\ \hline
Maurer’s “Universal Statistical” Test         &   0.249284 &  0.171867 &  0.096578 &  0.419021 &  0.171867 &  0.798139 &  0.115387 &    0.275709 \\ \hline
Approximate Entropy Test             & 0.236810 &  0.514124 &  0.262249 &  0.816537 &  0.474986 &  0.080519 &  0.000001 &    0.779188\\ \hline
Random Excursions Test*                &  0.353142 &  0.403219 &  0.229832 &  0.481025 &  0.317506 &  0.602978 &  0.362746 &    0.416274 \\ \hline
Random Excursions Variant Test*            &  0.412987 &  0.369181 &  0.313171 &  0.513679 &  0.274813 &  0.391166 &  0.454157 &    0.341012 \\ \hline
Serial Test* (m=10)                &  0.304324 &  0.102735 &  0.270033 &  0.384058 &  0.456684 &  0.125973 &  0.404429 &    0.253197 \\ \hline
Linear Complexity Test                &0.759756 &  0.153763 &  0.883171 &  0.171867 &  0.366918 &  0.319084 &  0.678686 &    0.075719 \\ \hline
Success                     &15/15&15/15&15/15&15/15&15/15&15/15&15/15&15/15  \\ \hline\hline
Diehard Test               &pass&pass&pass&pass&pass&pass&pass&pass\\ \hline
  \end{tabular}
\end{table*}


\section{Statistical analysis}
\label{sec:nist}
A good random number generator must be indistinguishable from a random sequence through any statistical test. 
As an illustration of the theory presented in this paper, we have used various batteries of tests in order to evaluate the quality of our proposed pseudo random number generator, when iterating functions of Table~\ref{New vectors of images}. These batteries are the well-known and stringent  DIEHARD~\cite{diehard} and NIST~\cite{nist} statistical test suites.

We can conclude from Table \ref{The passing rate} that all of the generators based on the new iterate functions have successfully passed both the NIST and DieHARD batteries of tests. 
These results show the good statistical properties of the proposed PRNGs, and thus the interest of the theoretical approach presented in this paper.

\section{Conclusion and future work}
\label{Conclusions and Future Work}

In previous researches, we have presented a pseudo-random number generator based on chaotic iterations.
It depends on an iteration function, formerly fixed to the negation function.
We have previously established a characterization of functions leading to a chaotic behavior for the associated generator.
However, this characterization allows unbalanced functions, whose generator cannot pass statistical tests.
We have proposed in this paper an algorithm that can find iteration functions leading to a chaotic generator statistically irreproachable. 
This algorithm has been used to find 8 functions such that their generators are both chaotic and able to pass the NIST and DIEHARD statistical batteries of tests.

In future work, we will continue to explore conditions that improve the randomness of the associated CI PRNGs. New statistical tests will be used to compare these PRNGs to existing ones, and a cryptanalysis of our generator will be proposed. 
Finally, new applications in computer science will be proposed, especially in the Internet security field.

\bibliographystyle{plain}
\bibliography{mabase}

\end{document}